\theoremstyle{plain}
\numberwithin{equation}{section}
\newtheorem{theorem}{Theorem}[section]
\newtheorem{definition}[theorem]{Definition}
\newtheorem{lemma}[theorem]{Lemma}
\newtheorem{observation}[theorem]{Observation}
\crefname{corollary}{Corollary}{Corollaries}
\crefname{lemma}{Lemma}{Lemmata}
\newcommand{\N}{\mathbb{N}}
\newcommand{\bigO}{\mathcal{O}}
\renewcommand{\epsilon}{\ensuremath\varepsilon}
\renewcommand{\phi}{\ensuremath{\varphi}}
\DeclareMathOperator{\GLB}{GLB}
\DeclareMathOperator{\LUB}{LUB}
\DeclareMathOperator{\dia}{diam}
\newcommand{\up}[1]{u^{(#1)}}
\newcommand{\down}[1]{d^{(#1)}}
\newcommand{\SUTarski}{\textsc{Super-Unique-Tarski}}
\newcommand{\UTarski}{\textsc{Unique-Tarski}}
\newcommand{\Tarski}{\textsc{Tarski}}
\author{Sebastian Haslebacher\orcidlink{0000-0003-3988-3325}}
\author{Jonas Lill\orcidlink{0009-0007-1448-2425}}
\affil{Department of Computer Science, ETH Zurich, Switzerland}
\title{A Levelset Algorithm for 3D-\Tarski}
\date{}
\begin{document}

\maketitle

\begin{abstract}
    We present a simple new algorithm for finding a Tarski fixed point of a monotone function $F : [N]^3 \rightarrow [N]^3$. Our algorithm runs in $\bigO(\log^2 N)$ time and makes $\bigO(\log^2 N)$ queries to $F$, matching the $\Omega(\log^2 N)$ query lower bound due to Etessami et al.\ as well as the existing state-of-the-art algorithm due to Fearnley et al.
\end{abstract}

\newpage
\section{Introduction}

We say that a function $F : [N]^d \rightarrow [N]^d$ is monotone if $x \leq y$ implies $F(x) \leq F(y)$ for all $x, y \in [N]^d$, where $\leq$ is the partial order obtained by coordinate-wise comparison. Tarski~\cite{tarskiLatticetheoreticalFixpointTheorem1955} proved that any such monotone function $F$ must have a fixed point $x^\star = F(x^\star)$. Finding such a fixed point algorithmically has important applications in algorithmic game theory: For example, finding equilibria of supermodular games, approximating the value of Condons's or Shapley's stochastic games, and solving a zero-player game known as ARRIVAL all reduce to finding such a Tarski fixed point~\cite{dangComputationsComplexitiesTarskis2025, etessamiTarskiTheoremSupermodular2020, gartnerSubexponentialAlgorithmARRIVAL2021}. 

More concretely, given query access to a function $F : [N]^d \rightarrow [N]^d$, we refer to the computational problem of finding a fixed point of $F$ as \Tarski. A recursive binary search algorithm due to Dang, Qi, and Ye~\cite{dangComputationsComplexitiesTarskis2025} can solve this problem in $\bigO(\log^d N)$ time and queries for fixed $d$, and initially it was rather unclear whether one should expect much better algorithms. Indeed, Etessami, Papadimitriou, Rubinstein, and Yannakakis~\cite{etessamiTarskiTheoremSupermodular2020} proved a query lower bound of $\Omega(\log^2 N)$ for $d \geq 2$ and conjectured a lower bound of $\Omega(\log^d N)$ in general. However, Fearnley, Pálvölgyi, and Savani~\cite{fearnleyFasterAlgorithmFinding2022} later disproved this conjecture by giving an algorithm solving the case $d = 3$ in $\bigO(\log^2 N)$ queries and time. They also proved a decomposition lemma that allowed them to use recursive applications of their 3D-algorithm to get an $\bigO(\log^{\lceil \nicefrac{2d}{3} \rceil} N)$-algorithm in general.

Since then, both the upper bounds and lower bounds have been improved, but the true complexity of the problem remains wide open. Concretely, the current best upper bound is due to Chen and Li~\cite{chenImprovedUpperBounds2022} who proved that a fixed point can be found in  $\bigO(\log^{\lceil \nicefrac{d + 1}{2} \rceil } N)$ time and queries. In terms of lower bounds, the state-of-the-art is an $\Omega(\nicefrac{d \log^2 N}{\log d})$ query lower bound for $d \geq 2$ that was recently announced by Brânzei, Phillips, and Recker~\cite{branzeiTarskiLowerBounds2025}.

In this paper, we present a new algorithm that can solve \Tarski\ for $d = 3$ in $\bigO(\log^2 N)$ time and queries, matching the asymptotically optimal algorithm by Fearnley, Pálvölgyi, and Savani~\cite{fearnleyFasterAlgorithmFinding2022}. Their algorithm proceeds by binary searching over two-dimensional slices of the 3D-grid. Their key insight is that it is possible to find a progress point $x$ satisfying either $F(x) \leq x$ or $F(x) \geq x$ in a given two-dimensional slice $S_k = \{x \in [N]^3 \mid x_1 = k\}$ in $\bigO(\log N)$ time. This inner algorithm is however quite complicated and in particular involves an intricate case analysis. We avoid this by binary searching over levelsets instead of slices. Concretely, given a number $k \in \{3, \dots, 3N\}$, we show how we can either find a point $x$ with $F(x) \leq x$ in $L_{\leq k} = \{x \in [N]^3 \mid x_1 + x_2 + x_3 \leq k \}$ or a point $x$ with $F(x) \geq x$ in $L_{\geq k} = \{x \in [N]^3 \mid x_1 + x_2 + x_3 \geq k \}$ in $\bigO(\log N)$ time. We achieve this by searching over the levelset $L_k = \{x \in [N]^3 \mid x_1 + x_2 + x_3 = k \}$. The advantage of levelsets is that their symmetry allows our algorithm to treat all three dimensions exactly the same. However, in contrast to the slices in~\cite{fearnleyFasterAlgorithmFinding2022}, no two points in a given levelset are comparable under the partial order $\leq$, making it seemingly more difficult to exploit monotonicity of $F$. Our main contribution is to overcome this difficulty.

Naturally, our algorithm can also be plugged into the decomposition lemma of~\cite{fearnleyFasterAlgorithmFinding2022} to obtain an $\bigO(\log^{\lceil \nicefrac{2d}{3} \rceil} N)$ bound for general $d$. Moreover, our levelset approach can in principle also be applied directly to higher-dimensional instances. Unfortunately, we have not yet been able to get close to the $\bigO(\log^{\lceil \nicefrac{2d}{3} \rceil} N)$ bound (or the state-of-the-art $\bigO(\log^{\lceil \nicefrac{d + 1}{2} \rceil } N)$ bound of Chen and Li~\cite{chenImprovedUpperBounds2022}, for that matter) for $d > 3$ using this more direct approach, but we do think that this is an interesting idea for future work. 

Finally, we want to point out that our levelset approach seems even more appropriate in restricted versions of \Tarski\ such as \UTarski\ and \SUTarski. Indeed, many of our ideas originated from studying \SUTarski. In terms of query complexity, studying \SUTarski\ would have been enough, given that Chen, Li, and Yannakakis~\cite{chenReducingTarskiUnique2023} proved that \Tarski, \UTarski, and \SUTarski\ all have the exact same query complexity. Unfortunately, their reductions are not time-efficient, which is why we avoided using them and instead generalized our ideas to work for \Tarski\ directly. 

\section{Preliminaries}

We use a slightly generalized definition of \Tarski\ on an integer grid $G = [n_1] \times [n_2] \times \dots \times [n_d]$, for arbitrary $n_1, \dots, n_d \in \N$. We will use $N = \sum_{i = 1}^d n_i$ to measure the size of $G$. The grid $G$ can be equipped with the partial order $\leq$ defined as 
\[
    x \leq y \iff \forall i \in [d] : x_i \leq y_i 
\]
for all $x, y \in G$. The tuple $(G, \leq)$ forms a lattice, but we will usually just (implicitly) denote this lattice by $G$. We use $\GLB(x^{(1)}, \dots, x^{(k)})$ to denote the greatest lower bound (meet) and $\LUB(x^{(1)}, \dots, x^{(k)})$ to denote the least upper bound (join) of a set of points $x^{(1)}, \dots, x^{(k)} \in G$ in this lattice. We also use $|x| \coloneqq \sum_{i = 1}^d x_i$ for the $1$-norm of $x \in G$. A function $F : G \rightarrow G$ is called monotone (or non-decreasing) if and only if 
\[
    x \leq y \implies F(x) \leq F(y)
\]
for all $x, y \in G$. The computational problem \Tarski\ is defined as follows: Given query access to a monotone function $F : G \rightarrow G$, find a fixed point $x^\star = F(x^\star)$ of $F$. Recall that such a fixed point has to exist by Tarski's theorem~\cite{tarskiLatticetheoreticalFixpointTheorem1955}. 

Our algorithm will take $\bigO(\log^2 N)$ time and will make $\bigO(\log^2 N)$ queries to $F$. Note that this excludes the time needed to actually make a query to $F$, which naturally depends on how access to $F$ is provided to us. For any specific model of access to $F$, one would have to multiply our bounds by the access time to $F$ in the given model. Moreover, our runtime is measured in arithmetic operations: Numbers in our algorithms can be as large as $\Omega(N)$, which means that they require $\Omega(\log N)$ bits to represent, and this would have to be factored in for other models of computation. We decided to abstract these concerns away for the sake of simplicity. In particular, we will from now on simply refer to the runtime of the algorithm, factoring out any overhead that is needed to actually evaluate $F$ or to calculate with $\bigO(\log N)$-bit numbers in certain models.

Moreover, note that our presentation focuses on the setting where $F$ is promised to be monotone. However, it is not hard to see that our algorithm could be adapted to the total setting, where the algorithm is allowed to return a violation of monotonicity instead of a fixed point whenever $F$ is not monotone. This is because our algorithm only exploits monotonicity locally: If a certain step of the algorithm should fail, then a violation of monotonicity must be present among a constant set of previous queries.

We will frequently use the notion of upward and downward points: A point $x \in G$ is called upward if $F(x) \geq x$, and downward if $F(x) \leq x$. Note that a point is both upward and downward simultaneously if and only if it is a fixed point of $F$. Also note that $(1, 1, \dots, 1) \in G$ must be an upward and $(n_1, n_2, \dots, n_d) \in G$ a downward point. Furthermore, any upward point $x$ and downward point $y$ satisfying $x \leq y$ imply that there exists a fixed point $x \leq x^\star \leq y$ of $F$: Indeed, due to monotonicity of $F$, the restriction of $F$ to the subgrid $G' = \{ z \in G \mid x \leq z \leq y\}$ is itself a valid \Tarski-instance. In other words, finding such $x$ and $y$ allows us to reduce the search to $G'$.

\section{Levelset Algorithm}
\label{sec:levelset_algorithm}

We start the presentation of our algorithm with the high-level idea and then proceed by explaining all the details step-by-step. The following notion of levelsets will be crucial.

\begin{definition}[Levelset] 
\label{def:levelsets}
    Let $G$ be a $d$-dimensional integer grid of size $N = \sum_{i = 1}^d n_i$ and let $k \in \{d, d + 1, \dots, N\}$ be arbitrary. The $k$-th levelset $L_k$ of $G$ is defined as $L_k \coloneqq \{ x \in G \mid |x| = k \}$. Similarly, we define $L_{\geq k} \coloneqq \bigcup_{i \geq k} L_i$ and $L_{\leq k} \coloneqq \bigcup_{i \leq k} L_i$.
\end{definition}

\begin{figure}[htb]
    \centering
    \includegraphics[width=0.7\linewidth]{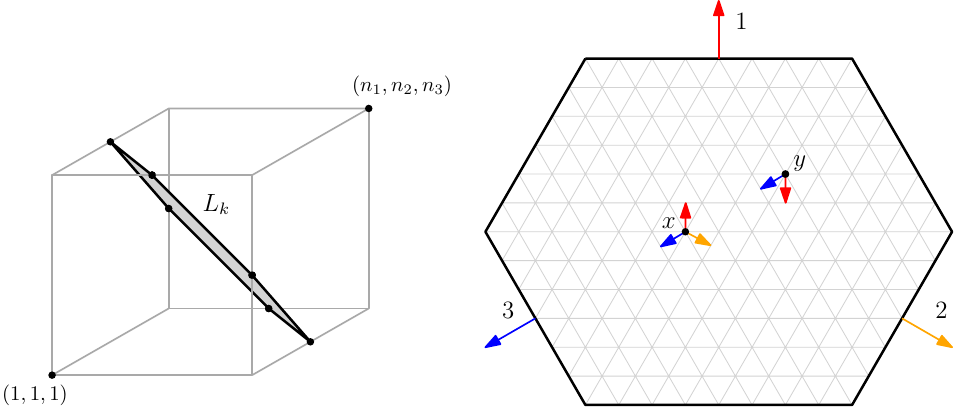}
    \caption{Levelset $L_k$ of a 3D-instance (on the left) projected to 2D (on the right). On the right, the three dimensions are indicated in red, orange, and blue, respectively. For both points $x$ and $y$, we indicate the function $F$ in each dimension separately. For example, $x$ is an upward point with $F(x)_i > x_i$ for all $i \in [3]$ while $y$ satisfies $F(y)_1 < y_1, F(y)_3 > y_3$, and $F(y)_2 = y_2$ (indicated by the absence of an orange arrow). }
    \label{fig:levelset}
\end{figure}

From now on, we restrict ourselves to the case $d = 3$. Our algorithm proceeds by finding an upward point in $L_{\geq \lceil N / 2 \rceil }$ or a downward point in $L_{\leq \lceil N / 2 \rceil }$ in time $\bigO(\log N)$. Once we have found such a point, we can shrink our search space by at least half and recursively apply the algorithm on the remaining subgrid, yielding an overall runtime of $\bigO(\log^2 N)$.

\begin{lemma}
\label{lemma:high-level_algo}
    Assume that for an arbitrary levelset $L_k$ in an instance $F : G \rightarrow G$ of 3D-\Tarski, we can find an upward point $x \in  L_{\geq k}$ or a downward point $x \in L_{\leq k}$ in time $\bigO(\log N)$. Then, a fixed point of $F$ can be found in time $\bigO(\log^2 N)$.
\end{lemma}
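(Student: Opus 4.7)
The plan is to use the hypothesized subroutine as the halving step of a binary search on the $1$-norm size $N$. Set $k = \lceil N/2 \rceil$; this lies in the valid range $\{3, \dots, N\}$ whenever $N \geq 6$, so for small instances with $N < 6$ (which contain only a constant number of grid points) I would find a fixed point by direct enumeration in $O(1)$ time. Otherwise, I invoke the assumed subroutine on the levelset $L_k$ at cost $O(\log N)$.

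If the subroutine returns an upward point $x \in L_{\geq k}$, I recurse on the sublattice $G' = \{z \in G \mid x \leq z \leq (n_1, n_2, n_3)\}$; if it returns a downward point $x \in L_{\leq k}$, I recurse instead on $G' = \{z \in G \mid (1,1,1) \leq z \leq x\}$. In both cases the restriction of $F$ to $G'$ is a valid 3D-\Tarski instance, as already observed in the Preliminaries: $x$ and the trivial endpoints $(1,1,1)$ and $(n_1,n_2,n_3)$ supply an upward/downward bracket for $G'$, and any fixed point of the restriction is automatically a fixed point of $F$. The key quantitative step is to check that $G'$ is strictly smaller in $1$-norm. In the upward case the sides of $G'$ have lengths $n_i - x_i + 1$, so
\[
    N' \;=\; \sum_{i=1}^{3}(n_i - x_i + 1) \;=\; N - |x| + 3 \;\leq\; N - k + 3 \;=\; \lfloor N/2 \rfloor + 3,
\]
and in the downward case the sides have lengths $x_i$, giving $N' = |x| \leq k = \lceil N/2 \rceil$. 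Either way, $N' \leq N/2 + 3$.

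Letting $T(N)$ denote the total runtime, we therefore obtain $T(N) \leq T(N/2 + 3) + O(\log N)$. Because $N$ decays geometrically, the recursion has depth $O(\log N)$, and even summing the finer per-level cost $O(\log N_i)$ still yields $T(N) = O(\log^2 N)$. The fixed point is produced at the base case and propagated back up.

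I do not expect a significant obstacle here: the only thing to verify beyond routine accounting is that each recursive call really receives a valid 3D-\Tarski instance, which is handled by the upward/downward-bracket observation from the Preliminaries. The one conceptual point worth highlighting is \emph{why} measuring size by the $1$-norm $N = \sum_i n_i$ (rather than, say, by a single side length) is the right choice: it is precisely the quantity that the levelset subroutine is guaranteed to cut in half, since $|x| \geq k$ in the upward case and $|x| \leq k$ in the downward case.
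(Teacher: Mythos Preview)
Your argument is correct and follows essentially the same approach as the paper's own proof: binary-search over levelsets, recurse on the sublattice bounded by the returned upward/downward point, and observe that the $1$-norm size shrinks geometrically. If anything, you are more careful than the paper, which asserts that $N$ drops by at least half while you correctly track the additive $+3$ in the upward case and verify that this still yields $O(\log N)$ recursion depth.
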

\begin{proof}
    We start by calling the subprocedure on the levelset $L_k$ for $k = \lceil N / 2 \rceil$ in order to find a point $x$ that is either an upward point in $L_{\geq k}$ or downward point in $L_{\leq k}$ . If $x$ is an upward point in $L_{\geq k}$, we recurse on the subgrid $G' = \{ y \in G \mid x \leq y \}$. Otherwise, $x$ is a downward point in $L_{\leq k}$ and we recurse on the subgrid $G' = \{ y \in G \mid x \geq y \}$. In both cases, $F$ restricted to $G'$ is again a \Tarski-instance. Moreover, the subgrid $G'$ has size at most half of $G$ (the parameter $N$ has shrunk by at least half). Once we arrive at a grid of constant size, we can find a fixed point in constant time. Thus, the overall procedure takes $\bigO(\log^2 N)$ time.
\end{proof}

Given \Cref{lemma:high-level_algo}, the main challenge that we have to overcome is to find either an upward point in $L_{\geq k}$ or a downward point in $L_{\leq k}$ in time $\bigO(\log N)$. For this, we will maintain a search space spanned by six bounding points within $L_k$. We will then iteratively shrink this search space in a binary search fashion. If we do not find an upward or downward point directly during the shrinking, we eventually arrive at a situation that must imply the existence of one out of three possible configurations among the six bounding points. For each of those three configurations, we then show how it implies an upward or downward point, as desired.

\subsection{Shrinking the Search Space}

We start by defining our search space and explaining how we can iteratively shrink it. For this, we first need to introduce so-called $i$-upward and $i$-downward points.

\begin{definition}[$i$-Upward and $i$-Downward Points]
    We call a point $x \in L_k$ $i$-upward for some $i \in [3]$ if 
    $F(x)_i > x_i$ and $F(x)_j \leq x_j$ for all $j \in [3] \setminus \{i\}$.
    Symmetrically, we call $x$ $i$-downward if 
    $F(x)_i < x_i$ and
     $F(x)_j \geq x_j$ for all $j \in [3] \setminus \{i\}$.
\end{definition}

\begin{figure}[htb]
    \centering
    \includegraphics[width=0.4\linewidth]{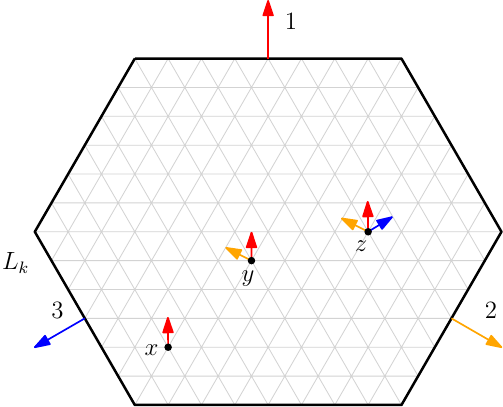}
    \caption{The three points $x, y, z$ are examples of $1$-upward points. Note that $x$ is in fact an overall upward point since it satisfies $F(x)_i \geq x_i$ for all $i \in [3]$. }
    \label{fig:1-upward}
\end{figure}

Note that any point that is neither upward nor downward must be $i$-upward or $i$-downward for some $i \in [3]$. This observation will be crucial for us and only holds for $d = 3$, which is the main reason why we could not yet generalize our algorithm to higher dimensions.

We now proceed to define our search space: At each step of the shrinking algorithm, our remaining search space is induced by six bounding points. Concretely, for all $i \in [3]$, we will maintain an $i$-upward point $\up{i} \in L_k$ and an $i$-downward point $\down{i} \in L_k$ satisfying $\up{i}_i \leq \down{i}_i$. 

\begin{definition}[Remaining Search Space] 
    Given the bounding points $\up{i}, \down{i} \in L_k$ satisfying $\up{i}_i \leq \down{i}_i$ for all $i \in [3]$, the induced remaining search space is defined as
    \[
        S \coloneqq \{ x \in L_k \mid \up{i}_i \leq x_i \leq \down{i}_i \text{ for all } i \in [3]\}
    \]
    and its diameter is a 3D-vector defined as 
    \[
        \dia(S)_i \coloneqq \max_{x, y \in S} |x_i - y_i| = \max_{x \in S} x_i - \min_{y \in S} y_i
    \]
    for all $i \in [3]$.
\end{definition}

\begin{figure}[hbt]
    \centering
    \includegraphics[width=0.4\linewidth]{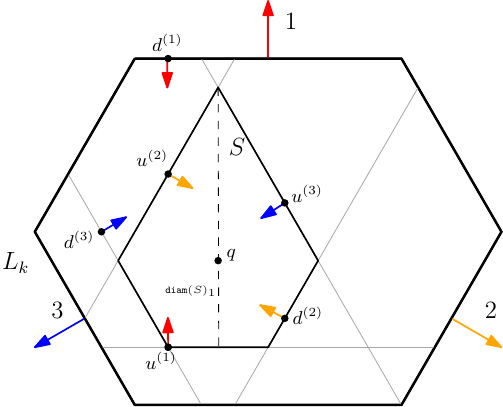}
    \caption{Example of the remaining search space $S$ (the region bounded by the black pentagon), induced by the six drawn bounding points. For simplicity, we only drew the arrow in dimension $i$ for each point $\up{i}$ or $\down{i}$. Observe in particular how $\down{1}$ does not currently contribute to the boundary of $S$, and in particular we have $\dia(S)_1 < \down{1}_1  - \up{1}_1$. Still, there is a point $q$ such that querying $q$ guarantees significant progress: Either $q$ is upward or downward which allows us to terminate, or it is $i$-upward or $i$-downward for some $i \in [3]$ which allows us to considerably shrink the diameter in one of the three dimensions.}
    \label{fig:enter-label}
\end{figure}

In the remainder of this section, we will prove that in $\bigO(\log N)$ time, we can shrink the remaining search space until we have $\dia(S)_i \leq 1$ for at least one $i \in [3]$. The proof of this is split into \Cref{lemma:shrink_search_space_I} and \Cref{lemma:shrink_search_space_II}.

\begin{lemma}[Shrinking Remaining Search Space I]
\label{lemma:shrink_search_space_I}
    Assume that we have $\dia(S)_i > 1$ for all $i \in [3]$ and $\max_{i \in [3]} \dia(S)_i \geq 6$ for our remaining search space $S$. Then there exists $q \in S$ such that 
    \[
        \max \left( \dia(\{ x \in S \mid x_i \leq q_i \})_i, \, \dia( \{ x \in S \mid x_i \geq q_i \})_i \right)  \leq \frac{5}{6} \dia(S)_i 
    \]
    for all $i \in [3]$.
\end{lemma}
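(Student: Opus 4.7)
The plan is to construct an integer point $q \in S$ whose coordinates lie in the middle-third of each projected range, i.e.\ $q_i \in [a_i + \tfrac{1}{6} D_i, a_i + \tfrac{5}{6} D_i]$ for every $i$, where $a_i := \min_{x \in S} x_i$, $b_i := \max_{x \in S} x_i$, and $D_i := \dia(S)_i = b_i - a_i$. The observation is that if $q \in S$, then the minimum of the $i$-th coordinate over $\{x \in S \mid x_i \leq q_i\}$ is $a_i$ (witnessed by any point in $S$ achieving $a_i$) and its maximum is $q_i$ (witnessed by $q$ itself); similarly for the other side. Hence the conclusion of the lemma is equivalent to $q_i - a_i \leq \tfrac{5}{6}D_i$ and $b_i - q_i \leq \tfrac{5}{6}D_i$ for every $i$, which is exactly the middle-third condition.

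To construct such $q$, I would parameterize $q_i = a_i + t_i$ with integer $t_i \in [\lceil D_i/6 \rceil, \lfloor 5 D_i/6 \rfloor]$. Since $\up{i}_i \leq a_i \leq q_i \leq b_i \leq \down{i}_i$ automatically holds in the range, $q$ lies in $S$ iff $|q| = k$, i.e.\ $t_1 + t_2 + t_3 = Y$ where $Y := k - (a_1 + a_2 + a_3)$. The heart of the argument is to show such $t_i$ exist. Picking for each $i$ a witness in $S$ achieving $x_i = b_i$ and using $|x| = k$ together with $x_j \geq a_j$ for $j \neq i$ gives $Y \geq D_i$; the symmetric argument using witnesses for $a_i$ gives $Y \leq D_j + D_l$ for $\{j, l\} = [3] \setminus \{i\}$. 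Taking the tightest bounds over $i$ yields the feasibility range
\[
    \max_i D_i \;\leq\; Y \;\leq\; \sum_i D_i - \max_i D_i.
\]
Combining the hypothesis $\max_i D_i \geq 6$ with $D_j \leq \max_i D_i$ for all $j$, one quickly derives the slack $\max_i D_i \geq \tfrac{1}{6}\sum_i D_i + 3$, and a short bookkeeping with $\lceil x \rceil \leq x + 1$ and $\lfloor x \rfloor \geq x - 1$ then yields
\[
    \sum_i \lceil D_i/6 \rceil \;\leq\; \max_i D_i \quad\text{and}\quad \sum_i \lfloor 5D_i/6 \rfloor \;\geq\; \sum_i D_i - \max_i D_i.
\]
Hence $Y$ lies in the achievable range $[\sum_i \lceil D_i/6 \rceil, \sum_i \lfloor 5 D_i/6 \rfloor]$ for $\sum_i t_i$, and a simple greedy distribution produces integer $t_i$ in the required per-coordinate intervals with $\sum_i t_i = Y$.

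The main obstacle is that the constraint $|q| = k$ couples the three coordinates of $q$: without it, each $q_i$ could be chosen independently in its middle-third and the lemma would be immediate. The assumption $\max_i D_i \geq 6$ is precisely what provides enough slack to absorb the rounding losses in both feasibility inequalities on $Y$, while the assumption $D_i > 1$ (i.e.\ $D_i \geq 2$) is what ensures each single-coordinate range $[\lceil D_i/6 \rceil, \lfloor 5D_i/6 \rfloor]$ is non-empty.
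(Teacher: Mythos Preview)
Your proposal is correct and follows essentially the same approach as the paper: both arguments reduce the claim to finding an integer point $q\in L_k$ with each coordinate in the interval $[\ell_i+\lceil D_i/6\rceil,\; r_i-\lceil D_i/6\rceil]$ (equivalently $t_i\in[\lceil D_i/6\rceil,\lfloor 5D_i/6\rfloor]$), and establish feasibility via the two sum inequalities $\sum_i\ell_i+\max_i D_i\le k$ and its symmetric counterpart combined with $3\lceil M/6\rceil\le M$ for $M\ge 6$. Your write-up is in fact a bit more explicit than the paper's (you spell out why the diameter bound reduces to the coordinate constraint on $q$, isolate the quantity $Y=k-\sum_i a_i$ and sandwich it, and note that $D_i\ge 2$ is needed for each single-coordinate interval to be non-empty), but the underlying argument is the same.
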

\begin{proof}
    Assume without loss of generality $\dia(S)_1 \leq \dia(S)_2 \leq \dia(S)_3$. Let $\ell_i \coloneqq \min_{x \in S} x_i$ and analogously $r_i \coloneqq \max_{x \in S} x_i$ for $i \in [3]$. To prove the statement, it suffices to prove that there exists $q \in L_k$ satisfying
    \[
        \left( \ell_1 + \lceil \frac{1}{6}\dia(S)_1 \rceil, \ell_2 + \lceil \frac{1}{6}\dia(S)_2 \rceil, \ell_3 + \lceil \frac{1}{6}\dia(S)_3 \rceil \right) \leq q 
    \]
    and 
    \[
        \left( r_1 - \lceil \frac{1}{6}\dia(S)_1 \rceil , r_2 - \lceil \frac{1}{6}\dia(S)_2 \rceil, r_3 - \lceil \frac{1}{6}\dia(S)_3 \rceil \right) \geq q.
    \]
    To see that such a $q$ exists, observe first that we must have 
    \[
        \ell_1 + \lceil \frac{1}{6}\dia(S)_1 \rceil + \ell_2 + \lceil \frac{1}{6}\dia(S)_2 \rceil + \ell_3 + \lceil \frac{1}{6}\dia(S)_3  \rceil \leq k 
    \]
    because we can upper bound $\dia(S)_1$ and $\dia(S)_2$ by $\dia(S)_3$, use $\dia(S)_3 \geq 6$, and use that
    \[
        \ell_1 +  \ell_2 + \ell_3 + \dia(S)_3 \leq k,
    \] 
    by definition of $\dia(S)_3$ and $\ell_1, \ell_2, \ell_3$. Analogously, we derive
    \[
        k \leq r_1 - \lceil \frac{1}{6}\dia(S)_1 \rceil + r_2 - \lceil \frac{1}{6}\dia(S)_2 \rceil + r_3 - \lceil \frac{1}{6}\dia(S)_3  \rceil. 
    \]
\end{proof}

Note that the point $q$ in \Cref{lemma:shrink_search_space_I} can be constructed efficiently from the six bounding points. Indeed, it suffices to compute the values $\ell_i, r_i, \dia(S)_i$ for all $i \in [3]$ and then choose $q$ with $|q| = k$ arbitrarily within the described bounds. 

Now recall that \Cref{lemma:shrink_search_space_I} allows us to shrink the remaining search space only as long as we have $\max_{i \in [3]} \dia(S)_i \geq 6$. Thus, we still have to handle the case $\max_{i \in [3]} \dia(S)_i < 6$. For this, we exploit that $\max_{i \in [3]} \dia(S)_i < 6$ implies that $|S|$ is constant.

\begin{lemma}[Shrinking Remaining Search Space II]
\label{lemma:shrink_search_space_II}
    Assume that we have $\dia(S)_i > 1$ for all $i \in [3]$ and assume $S$ has constant size. Then we can, in constant time, either decrease $\dia(S)_i$ for at least one $i \in [3]$ or find an upward point in $L_{\geq k}$ or a downward point in $L_{\leq k}$.
\end{lemma}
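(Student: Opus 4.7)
Since $S$ has constant size by assumption, we can afford to query $F$ at every point of $S$ in constant time.

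If any queried $q$ is upward (resp.\ downward), then since $q \in S \subseteq L_k$, it serves directly as the desired upward point in $L_{\geq k}$ (resp.\ downward point in $L_{\leq k}$), and we are done. Otherwise, every queried $q \in S$ must be $i$-upward or $i$-downward for some $i \in [3]$ (since no queried point is a fixed point or violates monotonicity). The plan is to use this information to update one of the six bounding points $\up{i}, \down{i}$ so as to strictly shrink some $\dia(S)_i$.

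The update strategy is driven by the extremal points of $S$. For each $i \in [3]$, choose $p^{(i,+)} \in S$ with $p^{(i,+)}_i = r_i$ and $p^{(i,-)} \in S$ with $p^{(i,-)}_i = \ell_i$, where $\ell_i = \min_{x \in S} x_i$ and $r_i = \max_{x \in S} x_i$. If $p^{(i,+)}$ turns out to be $i$-upward, then replacing $\up{i}$ by $p^{(i,+)}$ raises the $i$-th coordinate of $\up{i}$ to $r_i$. Because $\dia(S)_i > 1$ implies $r_i > \ell_i$, and because the second term in $\ell_i = \max(\up{i}_i, k - \down{j}_j - \down{l}_l)$ is at most the old $\ell_i \leq r_i$, the new minimum $i$-th coordinate of $S$ jumps up to $r_i$, so $\dia(S)_i$ drops to $0$. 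Symmetrically, if $p^{(i,-)}$ is $i$-downward, replacing $\down{i}$ by $p^{(i,-)}$ drives $\dia(S)_i$ to $0$. Thus, if any of these six favorable configurations occurs, we are done.

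The main obstacle I expect is the residual case where, for every $i \in [3]$, the extremal point $p^{(i,+)}$ is not $i$-upward and $p^{(i,-)}$ is not $i$-downward. Each such extremal point then carries one of only five remaining labels, and the plan is a case analysis on these labels. The analysis should combine: (i) the constraint $p^{(i,\pm)} \in S$, which pins down many coordinates; (ii) monotonicity of $F$ applied to pairs of extremal points that are comparable in the lattice; and (iii) the strict gap $\dia(S)_i > 1$, which ensures that whenever we do replace a bounding point, the update is genuinely strict. The analysis should either force a contradiction, reveal an upward or downward point (handled in the second paragraph), or expose a secondary bounding-point update that strictly shrinks some $\dia(S)_i$. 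Since $|S| = \bigO(1)$ and the case analysis branches only on a constant number of labels, the total work remains constant.
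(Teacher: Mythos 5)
There is a genuine gap: your sketch handles only the ``favorable'' cases and explicitly defers the residual case, which is exactly where the content of the lemma lies. To see that the residual case is not routine, consider $S$ equal to the triangle $\{x \in L_k \mid x_i \geq \ell_i \text{ for all } i\}$ with $\ell_1+\ell_2+\ell_3 = k-2$, and look at the corner $(\ell_1+2,\ell_2,\ell_3)$: it is simultaneously the maximizer in dimension $1$ and the minimizer in dimensions $2$ and $3$, so if its label is $1$-downward, $2$-upward, or $3$-upward, no bounding-point update at that point shrinks any diameter. The same holds at the other two corners, so knowing the labels of all points of $S$ does not by itself ``expose a secondary bounding-point update''; a further structural argument is required, and asserting that a case analysis ``should either force a contradiction, reveal an upward or downward point, or expose'' an update is not that argument.

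The paper closes this gap with a different dichotomy and one key trick you are missing. First, if some $q \in S$ has $q_i \notin \{\ell_i, r_i\}$ for every $i$, then \emph{any} label of $q$ (upward, downward, $i$-upward, or $i$-downward) yields progress, since $q$ is interior in each coordinate. If no such interior point exists, one deduces $\ell_1+\ell_2+\ell_3 = k-2$ (or symmetrically $r_1+r_2+r_3 = k+2$), so $S$ is the small triangle above, and one queries the three points $q^{(i)}$ obtained from the missing corner $(\ell_1,\ell_2,\ell_3)$ by incrementing the two coordinates other than $i$. Each $q^{(i)}$ either yields progress directly or is $i$-upward; and if all three are $i$-upward, then $\LUB(q^{(1)},q^{(2)},q^{(3)}) = (\ell_1+1,\ell_2+1,\ell_3+1) \in L_{k+1}$ is an upward point, because $F(\LUB(\cdot))_i \geq F(q^{(i)})_i \geq \ell_i + 1$ by monotonicity. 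This last step---combining three $i$-upward points whose least upper bound leaves the levelset---is the essential idea absent from your proposal, and without it (or a substitute) the residual case remains open.
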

\begin{proof}
    Consider again the coordinates $\ell_i \coloneqq \min_{x \in S} x_i$ and $r_i \coloneqq \max_{x \in S} x_i$ for $i \in [3]$. If there exists $q \in S$ with $q_i \notin \{\ell_i, r_i\}$ for all $i \in [3]$, then querying that point $q$ will allow us to shrink some diameter, as desired. Thus, assume now that no such point exists. In particular, there is no $q \in S$ satisfying
    \[
        \left( \ell_1 + 1, \ell_2 + 1, \ell_3 + 1 \right) \leq q \leq \left( r_1 - 1, r_2 - 1, r_3 - 1 \right).
    \]
    Hence, we must have $\ell_1 + 1 + \ell_2 + 1 + \ell_3 + 1 > k$ or $r_1 - 1 + r_2 - 1 + r_3  - 1 < k$. Without loss of generality, assume the former (the other case is symmetric). Note that we can actually deduce $\ell_1 + 1 + \ell_2 + 1 + \ell_3 + 1 = k + 1$ by observing that we must have $\ell_1 + \ell_2 + r_3 \leq k$ and $\ell_3 + 2 \leq r_3$. Consider the point $q^{(1)} = (\ell_1, \ell_2 + 1, \ell_3 + 1) \in S$. If it is not $1$-upward, we are guaranteed to make progress: Indeed, any of the other options allows us to make progress by decreasing some diameter or directly solving the levelset. Analogously, we make progress if $q^{(2)} = (\ell_1 + 1, \ell_2, \ell_3 + 1) \in S$ is not $2$-upward or if $q^{(3)} = (\ell_1 + 1, \ell_2 + 1, \ell_3) \in S$ is not $3$-upward. It remains to observe that if $q^{(i)}$ is $i$-upward for all $i \in [3]$, then the point $\LUB(q^{(1)}, q^{(2)}, q^{(3)})$ is upward. 
\end{proof}

Together, \Cref{lemma:shrink_search_space_I} and \Cref{lemma:shrink_search_space_II} imply that we can shrink our remaining search space until we reach $\dia(S)_i \leq 1$ for some $i \in [3]$ in time $\bigO(\log N)$, or find an upward or downward point on the way. 

\subsection{Three Configurations}
\label{sec:configurations}

Assume now that we have reached the situation $\dia(S)_i \leq 1$ for some $i \in [3]$. In that case, we will prove in \Cref{lemma:implied_configuration} that we can find one of three configurations of points, each of which leads us to an upward point in $L_{\geq k}$ or downward point in $L_{\leq k}$. 

\begin{definition}[First Configuration]
    We say that two points $x, y \in L_k$ are in first configuration if for some distinct $i, j \in [3]$, we have that
    \begin{itemize}
        \item $x$ is $i$-upward, $y$ is $j$-upward, $x_i \geq y_i$, and $x_j \leq y_j$, 
        \item or symmetrically, $x$ is $i$-downward, $y$ is $j$-downward, $x_i \leq y_i$, and $x_j \geq y_j$.
    \end{itemize}
\end{definition}

\begin{definition}[Second Configuration]
    We say that three points $x, y, z \in L_k$ are in second configuration if for some distinct $i, j, p \in [3]$, we have that
    \begin{itemize}
        \item $x$ is $i$-upward, $y$ is $j$-upward, $z$ is $p$-upward, $x_i \geq y_i$, $y_j \geq z_j$, and $z_p \geq x_p$, 
        \item or symmetrically, $x$ is $i$-downward, $y$ is $j$-downward, $z$ is $p$-downward, $x_i \leq y_i$, $y_j \leq z_j$, and $z_p \leq x_p$.
    \end{itemize}
\end{definition}

Recall that an $i$-upward point $x$ must satisfy $F(x)_j \leq x_j$ for all $j \in [3] \setminus \{i\}$. So if, for example, we have $x, y$ in first configuration with $x$ being $i$-upward and $y$ being $j$-upward, then we can deduce that $\GLB(x, y)$ is downward, by monotonicity of $F$. In fact, this argument works for any kind of first or second configuration, yielding the following two observations.

\begin{observation}[First Configuration $\implies$ Progress]
\label{observation:first_config}
    Assume that $x, y \in L_k$ are in first configuration. Then either $\GLB(x, y) \in L_{\leq k}$ is downward or $\LUB(x, y) \in L_{\geq k}$ is upward.
\end{observation}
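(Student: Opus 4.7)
The plan is to treat the two sub-cases of the first configuration symmetrically, so I would begin by remarking that the downward case is completely analogous to the upward case (with $\LUB$ and $\geq$ replacing $\GLB$ and $\leq$ throughout, relying on monotonicity in exactly the same way). I would then fix WLOG the upward case, let $p \in [3]$ be the unique index distinct from $i$ and $j$, and set $g \coloneqq \GLB(x,y)$. The containment $g \in L_{\leq k}$ is immediate from $|g| = \sum_\ell \min(x_\ell, y_\ell) \leq \sum_\ell x_\ell = k$, so the real content is showing that $g$ is downward.

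To show $F(g) \leq g$, I would check the three coordinates one by one, using monotonicity $F(g) \leq F(x)$ and $F(g) \leq F(y)$ as the only analytic input, combined with the defining inequalities of the configuration. For coordinate $i$ (the ``bad'' coordinate for $x$), the point $x$ provides no useful bound, but $y$ is $j$-upward with $j \neq i$, so $F(y)_i \leq y_i$; combined with $x_i \geq y_i$, this gives $g_i = y_i \geq F(y)_i \geq F(g)_i$. The argument for coordinate $j$ is symmetric, using $x_j \leq y_j$ and the fact that $x$ is $i$-upward with $i \neq j$. For the third coordinate $p$, both $x$ and $y$ yield the bound, since $F(x)_p \leq x_p$ and $F(y)_p \leq y_p$, whence $F(g)_p \leq \min(F(x)_p, F(y)_p) \leq \min(x_p, y_p) = g_p$.

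The crux — and the only place where the precise form of the first configuration gets used — is the matching observation in the previous paragraph: whenever the $i$-upward point $x$ fails to control coordinate $i$, the inequality $x_i \geq y_i$ ensures that the $\GLB$ inherits its $i$-th coordinate from $y$, and $y$ in turn does control coordinate $i$ since $i \neq j$. This alignment of ``who is small in which coordinate'' with ``who has a useful upper bound in which coordinate'' is exactly what the definition of first configuration encodes, so once this correspondence is spelled out the proof is essentially a one-line verification. I do not anticipate any genuine obstacle; the entire argument should fit in under half a page.
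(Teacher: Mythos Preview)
Your proposal is correct and is exactly the argument the paper has in mind: the paper itself only gives the one-sentence sketch ``we can deduce that $\GLB(x,y)$ is downward, by monotonicity of $F$'' immediately before stating the observation, and your coordinate-by-coordinate verification is precisely the unpacking of that sentence. There is no meaningful difference in approach.
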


\begin{observation}[Second Configuration $\implies$ Progress]
\label{observation:second_config}
    Assume that $x, y, z \in L_k$ are in second configuration. Then either $\GLB(x, y, z) \in L_{\leq k}$ is downward or $\LUB(x, y, z) \in L_{\geq k}$ is upward.
\end{observation}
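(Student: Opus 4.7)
The plan is to mirror the (unstated) argument for the first configuration in \Cref{observation:first_config} and, in fact, to establish the stronger statement that in the upward case the meet $m \coloneqq \GLB(x, y, z)$ is always downward (symmetrically, in the downward case the join $M \coloneqq \LUB(x, y, z)$ is always upward). The disjunction in the statement then really encodes the two symmetric halves of the second configuration definition within a single observation.

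I treat only the upward case. Without loss of generality, $(i, j, p) = (1, 2, 3)$, so the $i$-upward conditions give $F(x)_2 \leq x_2$, $F(x)_3 \leq x_3$, $F(y)_1 \leq y_1$, $F(y)_3 \leq y_3$, $F(z)_1 \leq z_1$, $F(z)_2 \leq z_2$, together with the ordering constraints $x_1 \geq y_1$, $y_2 \geq z_2$, $z_3 \geq x_3$. Since $m \leq x, y, z$, monotonicity of $F$ yields $F(m)_r \leq \min(F(x)_r, F(y)_r, F(z)_r)$ for each $r \in [3]$. For $r = 1$, the two $i$-upward bounds available at $y$ and $z$ give $F(m)_1 \leq \min(y_1, z_1)$, while the ordering condition $x_1 \geq y_1$ collapses $m_1 = \min(x_1, y_1, z_1)$ to exactly $\min(y_1, z_1)$; hence $F(m)_1 \leq m_1$. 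The cases $r = 2$ and $r = 3$ then follow by cyclically shifting the roles of $(x, y, z)$ and $(1, 2, 3)$ and invoking $y_2 \geq z_2$ and $z_3 \geq x_3$, respectively. Thus $m$ is downward, and $m \leq x$ together with $|x| = k$ gives $|m| \leq k$, i.e., $m \in L_{\leq k}$, as required.

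The only real obstacle is notational bookkeeping: for each coordinate $r$ one has to check that the two $F(\cdot)_r$ upper bounds supplied by the $i$-upward definition come from precisely the two of $\{x, y, z\}$ that are not $r$-upward, and that the corresponding ordering condition from the second configuration is exactly the one that forces the remaining three-way minimum to drop down to the two-way minimum. I would therefore spell out the $r = 1$ computation in full, invoke the cyclic symmetry for $r = 2$ and $r = 3$, and close by noting that the downward case of the configuration is handled by the same calculation with all inequalities reversed, yielding instead that $\LUB(x, y, z) \in L_{\geq k}$ is upward.
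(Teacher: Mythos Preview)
Your argument is correct and is precisely the detailed version of what the paper sketches in the paragraph preceding \Cref{observation:first_config} and \Cref{observation:second_config}: using that each of $x,y,z$ is non-increasing in the two coordinates it is not ``upward'' in, together with the ordering constraints, to show $\GLB(x,y,z)$ is downward (and symmetrically for the downward case). Your additional remark that the disjunction in the statement simply packages the two symmetric halves of the definition is also exactly the intended reading.
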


\begin{definition}[Third Configuration]
\label{def:opposite_points}
    We consider two points $x, y \in L_k$ to be in third configuration if there is $i \in [3]$ such that $x$ is $i$-upward, $y$ is $i$-downward, and $x_i \leq y_i \leq x_i + 1$.
\end{definition}

In contrast to the first and second configuration, the third configuration does not directly imply an upward or downward point. However, we will see in \Cref{sec:third_configuration} how we can still make progress using at most $\bigO(\log N)$ additional time. But before getting to that, let us prove that a search space $S$ with $\dia(S)_i \leq 1$ for some $i \in [3]$ implies existence of at least one of the three configurations. 

\begin{figure}[htb]
    \centering
    \includegraphics[width=0.8\linewidth]{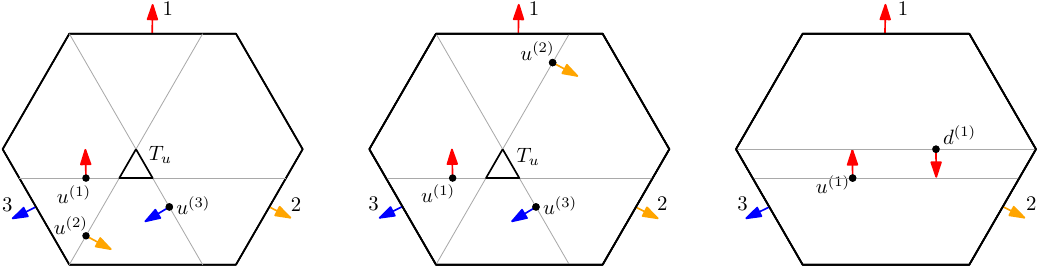}
    \caption{Illustrations for \Cref{lemma:implied_configuration}. On the left, the two points $\up{1}$ and $\up{2}$ are in first configuration. In the middle, the three points $\up{1}, \up{2}, \up{3}$ are in second configuration. On the right, the two points $\up{1}, \down{1}$ are in third configuration. For simpilicity, we only show the arrow in dimension $i$ for points $\up{i}$ and $\down{i}$.}
    \label{fig:three_configurations}
\end{figure}

\begin{lemma}[Small Diameter $\implies$ Configuration]
\label{lemma:implied_configuration}
    Assume $\up{i}_i \leq \down{i}_i$ for all $i \in [3]$, and assume that the induced remaining search space $S$ has diameter $\dia(S)_i \leq 1$ for some $i \in [3]$. Then a first, second, or third configuration can be found among the six points $\up{1}, \down{1}, \up{2}, \down{2}, \up{3}, \down{3}$. 
\end{lemma}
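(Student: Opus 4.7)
The plan is to assume without loss of generality $\dia(S)_1 \leq 1$ and split into cases depending on which term attains the extremum of $x_1$ over $S$. Using the levelset constraint $x_1 + x_2 + x_3 = k$, a short calculation gives
\[
\min_{x \in S} x_1 = \max(\up{1}_1,\, k - \down{2}_2 - \down{3}_3), \qquad \max_{x \in S} x_1 = \min(\down{1}_1,\, k - \up{2}_2 - \up{3}_3),
\]
so the dim-$1$ range of $S$ is governed either by the ``own'' bounds $\up{1}_1, \down{1}_1$ or by ``cross'' bounds coming from the other four bounding points.

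If both the $\up{1}_1$ and $\down{1}_1$ terms are tight, then $\dia(S)_1 = \down{1}_1 - \up{1}_1 \leq 1$, so $\up{1}$ and $\down{1}$ are in third configuration. Otherwise at least one cross constraint is active. I would focus on the case $\down{1}_1 > k - \up{2}_2 - \up{3}_3$ (the other case $\up{1}_1 < k - \down{2}_2 - \down{3}_3$ is symmetric and ends on the upward triple). This rewrites as $\up{2}_2 + \up{3}_3 > \down{1}_2 + \down{1}_3$, so by pigeonhole there is $j \in \{2,3\}$ with $\up{j}_j > \down{1}_j$, and combined with $\up{j}_j \leq \down{j}_j$ we get $\down{j}_j > \down{1}_j$. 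Taking $j=2$, I would first test first configuration on $(\down{1}, \down{2})$: its defining inequality reduces to $\down{1}_1 \geq \down{2}_1$, so if this holds we are done. Otherwise $\down{1}_1 < \down{2}_1$, and $|\down{1}|=|\down{2}|=k$ forces $\down{1}_3 > \down{2}_3$.

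To close the argument, I would case-split on first configuration between $(\down{1}, \down{3})$. If it holds, we are done; otherwise $(\down{1}_1 - \down{3}_1)$ and $(\down{1}_3 - \down{3}_3)$ share a common nonzero sign. The sub-case ``both negative'' gives $\down{1}_1 < \down{3}_1$ and $\down{1}_3 < \down{3}_3$, which via $|\down{1}|=|\down{3}|=k$ forces $\down{3}_2 < \down{1}_2 < \down{2}_2$, and chaining $\down{2}_3 < \down{1}_3 < \down{3}_3$ yields first configuration on $(\down{2}, \down{3})$ directly. The sub-case ``both positive'' propagates through $(\down{2}, \down{3})$: no-first-configuration together with $|\down{2}|=|\down{3}|=k$ forces $\down{2}_2 < \down{3}_2$ and $\down{2}_3 < \down{3}_3$, and the accumulated inequalities yield the cyclic pattern $\down{1}_1 \leq \down{2}_1$, $\down{2}_2 \leq \down{3}_2$, $\down{3}_3 \leq \down{1}_3$, which is exactly the downward form of second configuration. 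The main obstacle is this bookkeeping; the recurring ingredient that makes it work is that $|x|=k$ converts any two strict inequalities between a pair of points into a strict inequality of opposite sign in the third coordinate, which is what systematically eliminates every non-cyclic arrangement.
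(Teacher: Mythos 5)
Your opening reduction is sound: assuming $S \neq \emptyset$, one indeed has $\max_{x\in S} x_1 = \min(\down{1}_1,\, k - \up{2}_2 - \up{3}_3)$ and $\min_{x\in S} x_1 = \max(\up{1}_1,\, k - \down{2}_2 - \down{3}_3)$, and the sub-case where both ``own'' bounds are binding correctly yields a third configuration on the pair $\up{1}, \down{1}$. The argument breaks in the cross-constraint case, in two places. First, a sign error: for two downward points the first configuration requires $\down{i}_i \leq \down{j}_i$ \emph{and} $\down{i}_j \geq \down{j}_j$ (the point that is ``bad'' in a coordinate must not determine the $\LUB$ in that coordinate), so once you have derived $\down{1}_2 < \down{2}_2$ the pair $(\down{1}, \down{2})$ can \emph{never} be in first configuration. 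Your claimed reduction of the test to ``$\down{1}_1 \geq \down{2}_1$'' uses the reversed, upward-type inequalities, under which neither $\GLB(\down{1},\down{2})$ being downward nor $\LUB(\down{1},\down{2})$ being upward can be certified.

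Second, and more fundamentally, in the case $\down{1}_1 > k - \up{2}_2 - \up{3}_3$ you never use $\dia(S)_1 \leq 1$ again: you extract only the single inequality $\down{j}_j > \down{1}_j$ and then try to force a configuration among $\down{1}, \down{2}, \down{3}$ by sign bookkeeping. This cannot work, because three downward points in ``general position'' (say $\down{1}=(k-2,1,1)$, $\down{2}=(1,k-2,1)$, $\down{3}=(1,1,k-2)$) satisfy $\down{j}_j > \down{1}_j$ yet admit no first or second configuration, and they are fully compatible with your case hypothesis. The point you are missing is that the activity of the cross constraint $k - \up{2}_2 - \up{3}_3$ is information about the \emph{upward} points: combined with $\dia(S)_1 \leq 1$ it forces either $\up{1}_1 + \up{2}_2 + \up{3}_3 \geq k - 1$, so that the triangle $T_u = \{x \in L_k \mid x_i \geq \up{i}_i \text{ for all } i\}$ has at most three points and the configuration hunt must be run on the \emph{upward} triple, or $(\down{2}_2 - \up{2}_2) + (\down{3}_3 - \up{3}_3) \leq 1$, giving a third configuration on pair $2$ or $3$ --- a sub-case your split omits entirely. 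The paper's proof goes exactly through this ``one of the two triangles $T_u, T_d$ has at most three points'' dichotomy and then performs the combinatorial hunt inside that small triangle; that hunt is the step your argument replaces with the flawed downward-triple chase, so the gap is essential rather than cosmetic.
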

\begin{proof}
    Consider the two triangle-shaped sets 
    \[
        T_d = \{ x \in L_k \mid  x_i \leq \down{i}_i \text{ for all } i \in [3]\}
    \quad \text{ and } \quad 
        T_u = \{ x \in L_k \mid  \up{i}_i \leq x_i \text{ for all } i \in [3]\}
    \]
    and observe that $S = T_u \cap T_d$ by definition.
    If no pair $\up{i}, \down{i}$ is in third configuration, then by the assumption $\dia(S)_i \leq 1$ for some $i \in [3]$, we have that either $T_u$ or $T_d$ consists of at most $3$ points (see also \Cref{fig:three_configurations}). Without loss of generality, assume that it is $T_u$ (the other case is symmetric). We can also assume that $\up{1}_2 \leq \up{2}_2$ (without loss of generality, since $|T_u| \leq 3$ implies that there must exist such a pair, see \Cref{fig:three_configurations}) . 
    Now if we also had $\up{2}_1 \leq \up{1}_1$, the two points would be in first configuration. Thus, assume from now on $\up{2}_1 > \up{1}_1$. Then, we must have $\up{2}_3 \leq \up{3}_3$ (since $|T_u| \leq 3$, see~\Cref{fig:three_configurations}). Again, $\up{2}_2 \geq \up{3}_2$ would imply a first configuration, so we assume $\up{2}_2 < \up{3}_2$. But this then implies $\up{3}_1 \leq \up{1}_1$, which means that the three points are in second configuration. 
\end{proof}

\subsection{Third Configuration \texorpdfstring{$\implies$}{Implies} Progress} 
\label{sec:third_configuration}

Next, we argue that we can also find an upward or downward point from a third configuration in time $\bigO(\log N)$.

\begin{figure}[htb]
    \centering
    \includegraphics[width=0.4\linewidth]{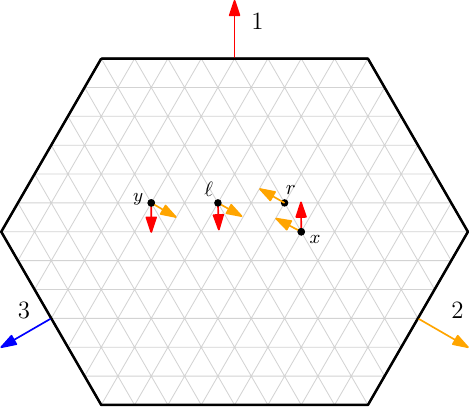}
    \caption{An example of the procedure that we use to deal with two points $x, y$ in third configuration. The point $r$ was initialized as the unique point with $r_1 = y_1$ and $r_2 = x_2 - 1$ while $\ell$ was initialized to be $y$ but then updated after one binary search step. Within $\bigO(\log N)$ steps, we will either find an upward or downward point, or shrink the interval between $\ell$ and $r$ to only contain two points, which again implies an upward or downward point, as desired.}
    \label{fig:third_configuration}
\end{figure}

\begin{lemma}[Third Configuration $\implies$ Progress]
\label{lemma:third_configuration}
    Given a pair of points $x, y \in L_k$ in third configuration, we can find an upward point in $L_{\geq k}$ or a downward point in $L_{\leq k}$ in time $\bigO(\log N)$.
\end{lemma}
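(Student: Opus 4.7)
Plan: Without loss of generality, assume the third configuration is in dimension 1, so $x, y \in L_k$ with $x$ being 1-upward, $y$ being 1-downward, and $x_1 \leq y_1 \leq x_1 + 1$. My strategy is to binary search along the 1-D line $\mathcal{L} = \{z \in L_k : z_1 = y_1\}$, parameterized by $z_2$ (note that $z_3 = k - y_1 - z_2$ is then determined). Throughout the search I maintain two bounding points $\ell, r \in \mathcal{L}$ with $\ell_2 < r_2$ and the invariant that $\ell$ is 1-downward and $r$ is 1-upward. Initialization sets $\ell := y$. For $r$: if $y_1 = x_1$ then $x$ lies on $\mathcal{L}$ and is 1-upward, so $r := x$; if instead $y_1 = x_1 + 1$ then $x \notin \mathcal{L}$ and we pick $r$ on $\mathcal{L}$ with $r_2 = x_2 - 1$ (as depicted in \Cref{fig:third_configuration}), spend one initial query on $r$ combined with a short monotonicity argument involving $x$ to either finish with an upward/downward point, confirm that $r$ is 1-upward, or update $\ell$ to a different 1-downward point on $\mathcal{L}$ so that the invariant holds --- this corresponds to ``$\ell$ updated after one binary search step'' in the figure caption.

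Each iteration queries the midpoint $m \in \mathcal{L}$ with $m_2 = \lfloor (\ell_2 + r_2)/2 \rfloor$. If $m$ is upward or downward we return $m$ (it lies in $L_k$). If $m$ is 1-upward, set $r := m$; if 1-downward, set $\ell := m$; either halves $r_2 - \ell_2$. In the remaining four cases --- $m$ being $i$-upward or $i$-downward for some $i \in \{2,3\}$ --- I issue one auxiliary query at a point of the form $\LUB(m, \ell)$, $\LUB(m, r)$, $\GLB(m, \ell)$, or $\GLB(m, r)$ chosen to match the case. A short case analysis shows that monotonicity with $\ell$ and $r$ pins down two of the three coordinates of $F$ at the auxiliary point, so that the third coordinate either certifies the auxiliary as an upward point in $L_{\geq k}$ or a downward point in $L_{\leq k}$ (terminating the algorithm) or yields a 1-upward/1-downward substitute that takes the role of $m$ and preserves the invariant. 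Each iteration thus costs $\bigO(1)$ queries and halves $r_2 - \ell_2$, so the search reaches $r_2 = \ell_2 + 1$ within $\bigO(\log N)$ queries.

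At termination, $\ell, r$ are adjacent on $\mathcal{L}$, i.e., $r_2 = \ell_2 + 1$ and $r_3 = \ell_3 - 1$. Query $\LUB(\ell, r) = (y_1, r_2, \ell_3) \in L_{k+1}$. Monotonicity ($\LUB \geq r$ and $\LUB \geq \ell$) gives $F(\LUB)_1 \geq F(r)_1 > r_1 = \LUB_1$ and $F(\LUB)_3 \geq F(\ell)_3 \geq \ell_3 = \LUB_3$. If $F(\LUB)_2 \geq \LUB_2$, then $\LUB$ is the desired upward point in $L_{\geq k}$. Otherwise, $F(\LUB)_2 \geq F(\ell)_2 \geq \ell_2 = \LUB_2 - 1$ combined with $F(\LUB)_2 < \LUB_2$ forces $F(\LUB)_2 = \ell_2$; monotonicity ($\GLB(\ell, r) \leq \LUB$, $\GLB \leq \ell$, $\GLB \leq r$) then yields $F(\GLB)_2 \leq F(\LUB)_2 = \ell_2 = \GLB_2$, $F(\GLB)_1 \leq F(\ell)_1 < \GLB_1$, and $F(\GLB)_3 \leq F(r)_3 \leq \GLB_3$, so $\GLB(\ell, r) \in L_{k-1}$ is the desired downward point in $L_{\leq k}$ without needing any further query.

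The main obstacle I anticipate is the binary-search case analysis for the four off-dimension midpoint types (2-up/down, 3-up/down). For each type one must identify the correct auxiliary ($\LUB$ or $\GLB$ paired with $\ell$ or $r$) and verify that monotonicity alone either terminates the algorithm or produces a 1-upward/1-downward substitute for $m$ that preserves both the invariant and the halving of the interval. This mirrors, in a simpler and more symmetric form, the intricate inner case analysis of Fearnley et al.\ that our levelset formulation aims to avoid in the outer algorithm.
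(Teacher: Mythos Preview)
Your binary-search skeleton along the line $\mathcal{L}=\{z\in L_k:z_1=y_1\}$ is the same as the paper's, but your invariant is different and, as stated, is not maintainable.

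The paper does \emph{not} keep $\ell$ 1-downward and $r$ 1-upward. It tracks only the signs in coordinates $1$ and $2$: $r$ has $F(r)_1\ge r_1$ and $F(r)_2<r_2$ (``type~(1)''), while $\ell$ has $F(\ell)_1<\ell_1$ and $F(\ell)_2>\ell_2$ (``type~(2)''). With this two-coordinate invariant every midpoint $q\in\mathcal{L}$ falls into one of four sign patterns on $(F(q)_1,F(q)_2)$; two of them terminate immediately via $\LUB(q,y)$ or $\GLB(x,q)$ (monotonicity alone, no extra query), and the other two make $q$ itself the new $r$ or $\ell$. No auxiliary queries are needed and $\ell,r\in\mathcal{L}$ throughout.

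Your three-coordinate invariant breaks down precisely in the 2-upward and 2-downward midpoint cases. Take $m$ 2-upward. Among your four auxiliary candidates, the only one for which monotonicity pins down two coordinates is $\LUB(m,\ell)=(y_1,m_2,\ell_3)$: one gets $F_2>m_2$ from $m$ and $F_3\ge\ell_3$ from $\ell$, so the remaining coordinate either certifies an upward point or makes $\LUB(m,\ell)$ 1-downward. But $\LUB(m,\ell)\notin\mathcal{L}$ (it lies in $L_{>k}$), so using it as the new $\ell$ violates your stated invariant $\ell,r\in\mathcal{L}$. Once $\ell$ or $r$ leaves $\mathcal{L}$, your termination step---which explicitly uses $r_3=\ell_3-1$---is no longer justified. (The 3-upward and 3-downward cases, by contrast, are First Configurations with $r$ resp.\ $\ell$ and terminate outright.)

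The approach is salvageable: relax the invariant to $\ell_1=r_1=y_1$, $\ell$ 1-downward with $\ell\in L_{\ge k}$, $r$ 1-upward with $r\in L_{\le k}$, and $\ell_2<r_2$. This is preserved by the 2-up/2-down updates, and at $r_2=\ell_2+1$ the inequality $F(\LUB(\ell,r))_2\ge F(\ell)_2\ge\ell_2=r_2-1$ still forces $F(\LUB)_2=\ell_2$ in the non-upward branch, so your $\GLB$ argument goes through. But as written the proposal has a genuine gap in the invariant and termination; the paper's two-coordinate types avoid the whole detour.
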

\begin{proof}
    Assume without loss of generality that $x$ is $1$-upward and $y$ is $1$-downward. Recall that we must have $x_1 \leq y_1 \leq x_1 + 1$. Observe that there has to be a coordinate $j \in \{2, 3\}$ satisfying $y_j < x_j$ since both $x$ and $y$ belong to $L_k$ and must be distinct. Without loss of generality, we will assume $j = 2$. From $x, y \in L_k$ we thus deduce $x_3 \leq y_3$. \Cref{fig:third_configuration} shows a sketch of the situation.
    
    Since $x$ is $1$-upward, we must have $F(x)_2 \leq x_2$. Observe that $F(x)_2 = x_2$ would imply that $\LUB(x, y) = (y_1, x_2, y_3)$ is an upward point: Indeed, $F(\LUB(x, y))_1 \geq \LUB(x, y)_1$ follows from combining $y_1 \leq x_1 + 1$ with $x$ being $1$-upward, whereas 
    $F(\LUB(x, y))_2 \geq \LUB(x, y)_2$ follows from $F(x)_2 = x_2$ and $F(\LUB(x, y))_3 \geq \LUB(x, y)_3$ follows from $y$ being $1$-downward. Thus, assume from now on $F(x)_2 < x_2$. Analogously, we can assume $F(y)_2 > y_2$ since otherwise, $\GLB(x, y)$ is a downward point. 

    Next, consider an arbitrary point $q \in L_k$ with $q_1 = y_1$ and $y_2 \leq q_2 < x_2$. If we have $F(q)_1 \geq q_1$ and $F(q)_2 \geq q_2$, then $\LUB(q, y) = (y_1 = q_1, q_2, y_3)$ is an upward point. Conversely, $F(q)_1 < q_1$ and $F(q)_2 \leq q_2$ together imply that $\GLB(x, q) = (x_1 \geq q_1 - 1, q_2, x_3)$ is a downward point. The only remaining cases are that $q$ either satisfies 
    \begin{itemize}
        \item[(1)] $F(q)_1 \geq q_1$ and $F(q)_2 < q_2$, 
        \item[(2)] or $F(q)_1 < q_1$ and $F(q)_2 > q_2$.
    \end{itemize}
    Our algorithm thus works as follows: We maintain two points $\ell, r \in L_k$ with $\ell_1 = r_1 = y_1$ and $y_2 \leq l_2 < r_2 < x_2$ such that $r$ is of type (1) and $\ell$ is of type (2). We initialize $\ell = y$ and let $r$ be the unique point with $r_1 = y_1$ and $r_2 = x_2 - 1$ (if $r$ is not of type (1) it implies and upward or downward point by our previous observation on points $q \in L_k$ with $q_1 = y_1$ and $y_2 \leq q_2 < x_2$). If we find any point that is not of type (1) or (2), we get an upward or downward point, as desired. Otherwise, we update $r$ or $\ell$ depending on whether the point is type (1) or (2). Thus, by proceeding in a binary search fashion, we need at most $\bigO(\log N)$ queries to either find an upward or downward point, or arrive at $\ell_2 + 1= r_2$ (and thus $\ell_3 = r_3 + 1$). In the latter case, either $\GLB(\ell, r) = (\ell_1 = r_1, \ell_2 = r_2 - 1, r_3 = \ell_3 - 1)$ is downward (if $F(r)_3 \leq r_3$) or $\LUB(\ell, r) = (\ell_1 = r_1, r_2 = \ell + 1, \ell_3 = r_3 + 1)$ is upward (if $F(r)_3 > r_3$). 
\end{proof}

\subsection{Initializing Search Space}

It remains to explain how we initially set up our search space. In particular, we need to explain how we can find initial points $\up{i}, \down{i} \in L_k$ satisfying $\up{i}_i \leq \down{i}_i$ for all $i \in [3]$ in $\bigO(\log N)$ time. We end up using a similar technique as in \Cref{lemma:third_configuration}.

\begin{figure}[htb]
    \centering
    \includegraphics[width=0.4\linewidth]{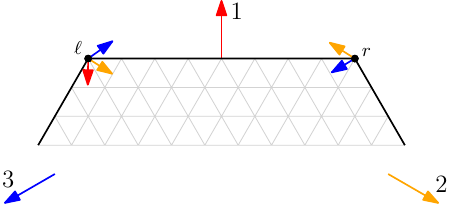}
    \caption{An example of the situation of finding an initial point $\down{1}$ in $L_k$, as described in \Cref{lemma:init}. The algorithm maintains $\ell$ and $r$ and proceeds in a binary search fashion. As long as no solution has been found, we will be able to maintain $F(\ell)_3 < \ell_3$ and $F(r)_2 < r_2$ while shrinking the interval between the two points.}
    \label{fig:init}
\end{figure}

\begin{lemma}[Initializing $\up{i}$ and $\down{i}$]
\label{lemma:init}
    For every $i \in [3]$, we can find an $i$-upward point $\up{i} \in L_k$ and an $i$-downward point $\down{i} \in L_k$ with $\up{i}_i \leq \down{i}_i$ in time $\bigO(\log N)$, or directly return an upward point in $L_{\geq k}$ or a downward point in $L_{\leq k}$.
\end{lemma}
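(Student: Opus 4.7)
The plan is to find $\down{1}$; by symmetry among the three coordinates and between upward and downward, the other five bounding points are obtained from five analogous and independent searches, and the required ordering $\up{i}_i \leq \down{i}_i$ will hold automatically because each $\down{i}$ will be constructed on the face of $L_k$ where $x_i$ is maximal while each $\up{i}$ will be constructed on the face where $x_i$ is minimal.

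For $\down{1}$, the idea is a one-dimensional binary search along the segment $P \coloneqq \{x \in L_k \mid x_1 = a^*\}$ with $a^* \coloneqq \max_{x \in L_k} x_1$. The crucial property of $P$ is that every $x \in P$ automatically satisfies $F(x)_1 \leq x_1$, so verifying that $x$ is $1$-downward reduces to checking just the two inequalities $F(x)_2 \geq x_2$ and $F(x)_3 \geq x_3$. We would initialize $\ell, r \in P$ at the two endpoints of $P$, with $\ell$ minimizing $x_2$ and $r$ minimizing $x_3$. Since each endpoint lies on a grid boundary, it comes with an additional free coordinate bound, so that querying $\ell$ and $r$ either returns a fixed, upward, downward, or $1$-downward point directly, or establishes the invariant $F(\ell)_3 < \ell_3$ and $F(r)_2 < r_2$, which will be maintained throughout.

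The main loop would query the midpoint $m \in P$ with $m_2$ halfway between $\ell_2$ and $r_2$. Using $F(m)_1 \leq m_1$ for free, a case analysis on the signs of $F(m)_2 - m_2$ and $F(m)_3 - m_3$ yields exactly four possibilities: both $\geq 0$ makes $m$ fixed, upward, or $1$-downward (return $m$); both $< 0$ combined with $F(m)_1 \leq m_1$ makes $m$ downward (return $m$); only $F(m)_2 < m_2$ lets us set $r \coloneqq m$ preserving the invariant; only $F(m)_3 < m_3$ lets us set $\ell \coloneqq m$ preserving the invariant. After $\bigO(\log N)$ iterations the bounds collapse to $r_2 = \ell_2 + 1$ and $\ell_3 = r_3 + 1$, and we return $\GLB(\ell, r) = (a^*, \ell_2, r_3) \in L_{k-1} \subseteq L_{\leq k}$, which is downward by monotonicity, since $F(\GLB(\ell, r))_2 \leq F(r)_2 \leq r_2 - 1 = \GLB(\ell, r)_2$, $F(\GLB(\ell, r))_3 \leq F(\ell)_3 \leq \ell_3 - 1 = \GLB(\ell, r)_3$, and $F(\GLB(\ell, r))_1 \leq a^* = \GLB(\ell, r)_1$.

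The main obstacle we expect is handling the degenerate cases where $P$ fails to be a proper segment. When $a^* = k - 2 < n_1$, the segment $P = \{(k-2, 1, 1)\}$ is a single point; a single query either finishes the problem directly or reveals that $(k-2, 1, 1)$ is $1$-upward with image $(c, 1, 1)$ for some $c > k - 2$, in which case monotonicity forces $F(c, 1, 1) \geq (c, 1, 1)$, so $(c, 1, 1) \in L_{\geq k}$ is an upward point to be returned. The other degenerate situation, where an endpoint of $P$ lies on a face $x_j = n_j$ rather than $x_j = 1$, is handled analogously by using the corresponding free bound $F(\cdot)_j \leq n_j$ in place of $F(\cdot)_j \geq 1$ and possibly swapping the roles of $\ell$ and $r$ in the invariant.
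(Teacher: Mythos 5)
Your proof is correct and takes essentially the same route as the paper's: the same segment $\{x \in L_k \mid x_1 = \max_{z \in L_k} z_1\}$, the same endpoints $\ell$ (minimizing $x_2$) and $r$ (minimizing $x_3$), the same invariants $F(\ell)_3 < \ell_3$ and $F(r)_2 < r_2$, the same binary-search update rule, and the same concluding $\GLB(\ell, r)$ argument. The only difference is cosmetic: you isolate the degenerate case $a^\star = k - 2 < n_1$ (a singleton segment) up front, whereas the paper folds it into the observation that $F(q)_1 > q_1$ forces $\ell = r$; note also that your detour through the image $(c,1,1)$ there is redundant, since any $q = (k-2,1,1)$ with $F(q)_1 \geq q_1$ is already an upward point of $L_{\geq k}$ because $F(q)_2 \geq 1 = q_2$ and $F(q)_3 \geq 1 = q_3$ hold automatically.
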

\begin{proof}
    Without loss of generality, assume $i = 1$. We will explain how we can find $\down{1} \in L_k$ in time $\bigO(\log N)$. The case of finding $\up{1}$ is symmetric and the property $\up{1}_1 \leq \down{1}_1$ will be guaranteed since $\down{1}_1$ will be as large as it can get among all points in $L_k$ (and symmetrically, $\up{1}_1$ will be as small as it can get in $L_k$).

    Concretely, let $\ell \in L_k$ be the unique point with $\ell_1$ maximized and $\ell_2$ minimized. Observe that at least two of the three inequalities $F(\ell)_1 \leq \ell_1$, $F(\ell)_2 \geq \ell_2$, and $F(\ell)_3 \leq \ell_3$ must hold at $\ell$ since $\ell$ must lie on an edge of our 3D-Grid. Thus, if we had $F(\ell)_2 < \ell_2$, $\ell$ would be downward and we would be done. Similarly, it is not hard to check by case distinction that $F(\ell)_3 \geq \ell_3$ implies that $\ell$ is either $1$-downward, upward, or downward. Thus, we can assume from now on $F(\ell)_2 \geq \ell_2$ and $F(\ell)_3 < \ell_3$.

    We also define $r \in L_k$ to be the unique point with $r_1$ maximized and $r_3$ minimized. Analogously to the case of $\ell$, we can assume $F(r)_2 < r_2$ and $F(r)_3 \geq r_3$. An example of the situation is shown in \Cref{fig:init}. 

    Now let $q \in L_k$ be an arbitrary point with $q_1 = \ell_1 = r_1$. First consider the special case where $F(q)_1 > q_1$. This would mean that there are points in the 3D-grid with larger first coordinate, but they are not part of the levelset. This can only happen if $\ell = r$, which by our previous observations implies that we would already be done. Thus, we can assume $F(q)_1 \leq q_1$. 
    
    Next, observe that we have $\ell_2 \leq q_2 \leq r_2$ and $r_3 \leq q_3 \leq \ell_3$. If we have both $F(q)_2 \geq q_2$ and $F(q)_3 \geq q_3$, then $q$ is $1$-downward and we are done. Otherwise, we get that $q$ satisfies either $F(q)_2 < q_2$ or $F(q)_3 < q_3$. 

    We now proceed in a binary search fashion, updating $\ell$ to $q$ whenever $F(q)_3 < q_3$ and $r$ to $q$ whenever $F(q)_2 < q_2$. After at most $\bigO(\log N)$ steps, we will have either found a solution or arrive at $\ell_2 \leq r_2 \leq \ell_2 + 1$ and $r_3 \leq \ell_3 \leq r_3 + 1$. In the latter case, we deduce that $\GLB(\ell, r)$ is a downward point by using $F(\ell)_3 < \ell_3$ and $F(r)_2 < r_2$.
\end{proof}

\subsection{Putting Everything Together} 
\label{sec:putting_everything_together}

We conclude by putting all the pieces together. Given a levelset $L_k$ of an instance of 3D-\Tarski, we can use \Cref{lemma:init} to initialize the six bounding points that induce our search space. We then use \Cref{lemma:shrink_search_space_I} and \Cref{lemma:shrink_search_space_II} to iteratively shrink our search space $S$ and achieve $\dia(S)_i \leq 1$ for some $i \in [3]$. Finally, we check our six bounding points for one of the three configurations. From \Cref{lemma:implied_configuration}, we know that at least one of the three configurations must exist. Depending on the configuration, we then use \Cref{observation:first_config}, \Cref{observation:second_config}, or \Cref{lemma:third_configuration} to conclude with an upward point in $L_{\geq k}$ or a downward point in $L_{\leq k}$. Each of those steps takes at most $\bigO(\log N)$ time, yielding an overall runtime of at most $\bigO(\log N)$. Together with \Cref{lemma:high-level_algo}, we thus get an overall $\bigO(\log^2 N)$-time algorithm for 3D-\Tarski.

\section*{Acknowledgments.}

We want to thank Simon Weber and Mika Göös for valuable discussions, and anonymous reviewers for their detailed feedback.

\bibliographystyle{plainurl}
\bibliography{references.bib}

\end{document}